\newenvironment{bulletitemize}{%
\begin{itemize}}{\end{itemize}}
\newcommand{\ra}[1]{\renewcommand{\arraystretch}{#1}} 
\newcommand{\MinSLDPC}{\textsc{Min-SLDPC }}
\newcommand{\MinSLDPCF}{\textsc{Min-SLDPC}}
\newcommand{\MinDPC}{\textsc{Min-DPC }}
\newcommand{\MinDPCF}{\textsc{Min-DPC}}
\newcommand{\SLDPC}{\textsc{SLDPC }}
\newcommand{\SLDPCF}{\textsc{SLDPC}}
\newcommand{\graphicsdir}{}
\begin{document}

\title{Isomorphic coupled-task scheduling problem with compatibility constraints on a single processor}
\author{G. Simonin \and B. Darties \and R. Giroudeau \and J.-C. K\"onig} 

\institute{G. Simonin \and R. Giroudeau \and J.-C. K\"onig \at
		LIRMM UMR 5506, rue Ada,\\
		34392 Montpellier Cedex 5 - France\\
        	\email{\{simonin,rgirou,konig\}@lirmm.fr}
           \and
           B. Darties \at  
        	LE2I UMR 5158\\
		9 Rue Alain Savary\\
		21000 Dijon -  France\\
        	\email{Benoit.Darties@u-bourgogne.fr}     		
}

\date{Received: date / Accepted: date}

\maketitle

\begin{abstract}
The problem presented in this paper is a generalization of the usual coupled-tasks scheduling pro\-blem in presence of compatibility constraints. The reason behind this study is the data acquisition problem for a submarine torpedo. We investigate a particular configuration for coupled-tasks (any task is divided into two sub-tasks separated by an idle time), in which the idle time of a coupled-task is equal to the sum of durations of its two sub-tasks. We prove $\mathcal{NP}$-completeness of the minimization of the schedule length, we show that finding a solution to our problem amounts to solving a graph problem, which in itself is close to the minimum-disjoint path cover (min-DCP) problem. We design a  $\left(\frac{3a+2b}{2a+2b}\right)$- approximation, where $a$ and $b$ (the processing time of the two sub-tasks) are two input data such as $a>b>0$, and that leads to a ratio between $\frac{3}{2}$ and $\frac{5}{4}$. Using a polynomial-time algorithm developed for some class of graph of min-DCP, we show that the ratio decreases to $\frac{1+\sqrt{3}}{2}\approx 1.37$.
\keywords{coupled-tasks \and complexity \and compatibility graph \and polynomial-time approximation}
\end{abstract}

\section{Introduction}
In this paper, we present a scheduling problem of coupled-tasks subject to compatibility constraints, which is a generalization of the scheduling problem of coupled-tasks first introduced by Shapiro \cite{Shapiro}. This problem is motivated by the problem of data acquisition in a submarine torpedo. The aim amounts to treating various environmental data coming from sensors located on the torpedo, that collect information which must be processed on a single processor. A single acquisition task can be described  as follows: a sensor of the torpedo emits a wave at a certain frequency (according to the data that must be  collected) which  propagates in the water and reflects back to the sensor. This acquisition task is divided into two sub-tasks:  
the first task consists in sending an ultrasound pulse while the second receives returning echo. Between them, there is an incompressible idle time which represents the spread of the echo under the water. Thus acquisition tasks may be assigned to coupled-tasks.

In order to use idle time, other sensors can send more echoes. However, the proximity of the waves causes disruptions and interferences. In order to handle information error-free, a compatibility graph between acquisition tasks is created. In this graph, which describes the set of tasks, we have an edge between two compatible tasks. A task is compatible with another if at least one of its sub-tasks can be executed during the idle time of another task. Given a set of coupled-tasks and such a compatibility graph, the aim is to schedule the coupled-tasks in order to minimize the time required for the completion of all the tasks.

\subsection{Notations}
First we present some common notations:
\begin{itemize}
\item Let $G$ be an undirected graph. We note $V(G)$ the set of its vertices and $E(G)$ the set of its edges;
\item  we note $n$ (resp. $m$) the  cardinality of set $V(G)$  (resp. $E(G)$); 
\item a $path$ is a non-empty graph $C$ with $V(C)=\{x_0,x_1,$ $\ldots,x_k \}$ and $E(C)\!=\!\{x_0x_1,\ldots,x_{k-1},x_k \}$, where all the $x_i$ are distinct;
\item the length of a path is the number of edges that the path uses.
\end{itemize}

Then, we introduce the notations relative to coupled-tasks we will use in the rest of the paper: we note $\mathcal{A}=\{A_1, A_2,\dots , A_n\}$ the set of $n$ coupled-tasks. Using the notation proposed by Shapiro \cite{Shapiro}, each task $A_i \in \mathcal{A}$ is composed of two sub-tasks $a_i$ and $b_i$. For clarity we use the same notations for the processing time of these tasks: $a_i$ and $b_i$ have processing time $a_i\in \mathbb{N}$ and $b_i\in \mathbb{N}$), and separated by a fixed idle time $L_i\in \mathbb{N}$ (see Figure \ref{single_coupled_task}). For each $i$ the second sub-task $b_i$ must start its execution exactly $L_i$ time units after the completion time of $a_i$. \\

According to the torpedo problem, a task may be started during the idle time of a running task if it uses another frequency, is not dependant on the execution of the running task (and reciprocally), or does not require to access the resources used by the running tasks. Formally, we say that two tasks $A_i$ and $A_j$ are \emph{compatible} if and only if we can execute at least a sub-task of $A_i$ during the idle time of $A_j$ (see Figure \ref{compatible_coupled_task}). On the other side, some tasks cannot be compatible due to previously cited reasons.

\begin{figure}[!hptb]
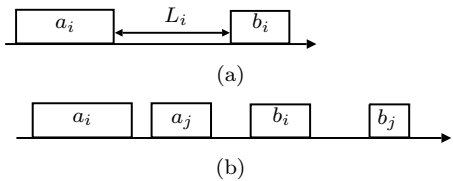

\begin{center}
\psfrag{a_i}{\small{$a_i$}}
\psfrag{b_i}{\small{$b_i$}}
\psfrag{a_j}{\small{$a_j$}}
\psfrag{b_j}{\small{$b_j$}}
\psfrag{L_i}{\small{$L_i$}}
\subfigure[]{\label{single_coupled_task}\includegraphics[height=.035\textwidth]{\graphicsdir single_coupled_task}}
\subfigure[]{\label{compatible_coupled_task}\includegraphics[height=.035\textwidth]{\graphicsdir compatible_coupled_task}}
\caption{A single coupled-task and two compatible coupled-tasks.}
\end{center}
\end{figure}

\subsection{Main problem formulation}

We aim at scheduling a set  of coupled-tasks with compatibility constraints on a monoprocessor. 
The input of the general problem is described with the set $ \mathcal{A}=\{A_1, A_2,\dots , A_n\}$ of coupled-tasks and a compatibility graph $G_c$, with $V(G_c)=\mathcal{A}$ and $E(G_c)$ the edges which represent all pairs of compatible tasks, ie an edge exists between $A_i$ and $A_j$ if and if only $a_j$ can be scheduled between $a_i$ and $b_i$ (Fig. \ref{compatible_coupled_task}). Note that compatibility is symmetrical, thus  here $a_i$ could be scheduled between $a_j$ and $b_j$.

\begin{figure}[!hptb]
\begin{center}
\psfrag{A1}{\small{$A_1$}}
\psfrag{A2}{\small{$A_2$}}
\psfrag{A3}{\small{$A_3$}}
\psfrag{A21}{\small{$a_2$}}
\psfrag{A22}{\small{$b_2$}}
\psfrag{A11}{\small{$a_1$}}
\psfrag{A12}{\small{$b_1$}}
\psfrag{A31}{\small{$a_3$}}
\psfrag{A32}{\small{$b_3$}}
\psfrag{cg}{\small{Compatibility graph}}
\includegraphics[height=.07\textwidth]{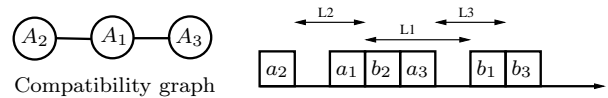}
\caption{Link between the compatibility graph and the scheduling}
\label{noncompatibles3}
\end{center}
\end{figure}

The solution of an instance consists in determining the starting time of each sub-task $a_i$ of each task $A_i\!\in\! \mathcal{A}$. The tasks have to be processed on a single processor while preserving the constraints given by the compatibility graph (see Figure \ref{noncompatibles3}). Formally,  we need to find a \emph{valid schedule} $\sigma : \mathcal{A} \rightarrow \mathbb{N}$  where the notation $\sigma(A_i)$ denotes the starting time of the task $A_i$. We use the following abuse of notation: $\sigma(a_i)=\sigma(A_i)$ (resp. $\sigma(b_i) =\sigma(A_i)+a_i + L_i$) denotes the starting time of the first sub-task $a_i$ (resp. the second sub-task $b_i$).\\

Let $C_{max}=max_{A_i \in \mathcal{A}}(\sigma(A_i)+a_i+L_i + b_i)$ be the required time to complete all the tasks. Then the objective is to find a feasible schedule which minimizes $C_{max}$. We use the notation scheme $\alpha | \beta |\gamma$ proposed by Graham and al. \cite{GrahamLLR79}, where $\alpha$ denotes the environment processors, $\beta$ the characteristics of the jobs and $\gamma$ the criteria. The main problem denoted as $\Pi$ will be defined by:
\begin{equation*}
\Pi=1|coupled-task, (a_i,b_i,L_i), G_c |C_{max}
\end{equation*}

\subsection{Related work}

The problem of coupled-tasks has been studied in regard to different conditions on the values of $a_i$, $b_i$, $L_i$ for $1\leq i \leq n$, and precedence constraints \cite{BEKPTW09,Ahr,LBF09,op1997}. Note that, in the previous works, all tasks are compatible by considering a complete graph \cite{BEKPTW09,Ahr,LBF09,op1997}. Moreover, in presence of any compatibility graph, we find several complexity results \cite{GillesThese,troye,incom}, which are summarized in Table \ref{tabb1}. The notation $a_i=a$ implies that for all $1\leq i \leq n$, $a_i$ is equal to a constant $a \in \mathbb{N}$. This notation can be extended to $b_i$ and $L_i$ with the constants $b,L$ and $p \in \mathbb{N}$. 
 \begin{table}[!hptb]
  \begin{center}
  \ra{1.3} 
\begin{tabular}{@{}lll@{}}
\toprule
\textbf{Problem}  & \textbf{Complexity} & \textbf{ref} \\
  \hline
 $1\vert coupled\!-\!task, (a_i\!=\!b_i\!=\!L_i), G_c \vert C_{max}$  & $\mathcal{NP}$-complete&\cite{GillesThese}\\
 $1\vert coupled\!-\!task, (a_i\!=\!a,b_i\!=\!b,L_i\!=\!L), G_c \vert C_{max}$ & $\mathcal{NP}$-complete &\cite{GillesThese}\\
$ 1\vert coupled\!-\!task, (a_i\!=\!b_i\!=\!p,L_i\!=\!L), G_c \vert C_{max}$ &$ \mathcal{NP}$-complete &\cite{incom}\\ 
 $1\vert coupled\!-\!task, (a_i\!=\!L_i\!=\!p,b_i), G_c \vert C_{max}$ & $O(n^2m)$ &\cite{GillesThese}\\
$ 1\vert coupled\!-\!task, (a_i,b_i\!=\!L_i\!=\!p), G_c \vert C_{max}$ & $O(n^2m)$ &\cite{GillesThese}\\
\bottomrule 
\end{tabular}
\caption{Complexity for scheduling problems with coupled-tasks and compatibility constraints}
 \label{tabb1}
 \end{center}
\end{table}

\subsection{Contribution and organization of this paper}

Our work consists in measuring the impact of the compatibility graph on the complexity and approximation of scheduling problems with coupled-tasks on a monoprocessor. In this way, we focus our work on establishing the limits between polynomiality and $\mathcal{NP}$-completeness of these problems according to some parameters, when the compatibility constraints is introduced. In \cite{troye,incom}, we have studied the impact of the parameter $L$, and have shown that the problem $1\vert coupled-task, (a_i=b_i=p,L_i=L), G_c \vert C_{max}$ was $\mathcal{NP}$-complete as soon as $L\geq 2$, and polynomial otherwise. 
 
In this work, we complete complexity results with the study of other special cases according to the value of $a_i$ and $b_i$, and we propose several approximation algorithms for them. We restrict our study to a special case, by adding new hypotheses to the processing time and idle time of the tasks. For any task $A_i$, $i \in \{1,\ldots, n \}$, the processing time $a_i$ (resp. $b_i$) of sub-task $a_i$ (resp. $b_i$) is equal to a constant $a$ (resp. $b$), and the length of the idle time between $a_i$ and $b_i$ is $L$. Considering homogeneous tasks is a realistic hypothesis according to the tasks that the torpedo has to execute. Let $\mathbf{\Pi_1}$ be this new problem. Formally:
\begin{equation*}
\Pi_1 =1|coupled-task, (a_i=a,b_i=b,L_i=L), G_c |C_{max}
\end{equation*}

This paper is organized as follows: in section \ref{sect_cmplx}, we establish the complexity of $\Pi_1$ according to the values of $a$, $b$ and $L$, and we show that the problem is polynomial for any $L < a + b$; then we consider in the rest of the paper that $L = a + b$. In that case, the problem can be considered as a new graph problem we call \textsc{Minimum Schedule-Linked Disjoint-Path Cover} (\MinSLDPCF). We present the proof of $\mathcal{NP}$-completeness of \MinSLDPC and we conclude this section by the study of a specific sub-case with $a=b=L/2$. In Section \ref{sect-approx}, we show that \MinSLDPC is immediately $2$-approximated by a simple approach: we design a polynomial-time approximation algorithm with performance guarantee lower than $\frac{3}{2}$. In fact, we show that the approximation ratio obtained by this algorithm is between  $\frac{3}{2}$ and $\frac{5}{4}$, according to the values of $a$ and $b$.  The last section is devoted to the study of $\Pi_1$ for some particular topology of the graph $G_c$. First we present a well-known graph problem, \textsc{Minimum Disjoint-Path Cover}, (\MinDPCF). Then we show the relation between \MinDPC and \MinSLDPC and evaluate how results from the first one can be applied to solve the second problem on specific topologies. This implies  the reduction of the performance ratio we can obtain on some restricted instances from  $\frac{3}{2}$ to $\approx 1.37$.

\section{Computational complexity}
\label{sect_cmplx}

First, we prove  that $\Pi_1$ is polynomial when $L < a + b$: it is obvious that a maximum matching in the graph $G_c$ gives an optimal solution. Indeed, during the idle time $L$ of a coupled-task $A_i$, we can process at most one sub-task $a_j$ or $b_k$. Since the idle time $L$ is identical, so it is obvious that finding an optimal solution consists in computing a maximum matching.
Thus, the problem $1|coupled-task, (a_i\!=\!a,b_i\!=\!b,L_i\!=\!L<a+b), G_c |C_{max}$ admits a polynomial-time algorithm with complexity $O(m\sqrt(n))$ where $n$ is the number of tasks and $m$ the number of edges of $G_c$ (see \cite{schrijver04}).\\

The rest of the paper is devoted to the case $L = a + b$. Without loss of generality, we consider the case\footnote{The results we present here can be symmetrically extended to  instances with  $b > a$.} of $b < a$. The particular case $b = a$ will be discussed in subsection \ref{a=b}.

\subsection{From a scheduling problem to a graph problem}

Let us consider a valid schedule $\sigma$ of an instance $(\mathcal{A}, G_c)$ of $\Pi_1$ with $b < a$, composed of a set of coupled-tasks $\mathcal{A}$ and a compatibility graph $G_c$. For a given task $A_i$, at most two sub-tasks may be scheduled between the completion time of $a_i$ and the starting time of $b_i$, and in this case the only available schedule consists in executing a sub-task $b_j$ and a sub-task $a_k$ during the idle time $L_i$ with $i\neq j \neq k$ such that $\sigma(b_j)=\sigma(a_i)+a$ and $\sigma(a_k)=\sigma(a_i)+a+b$. Figure \ref{two_task_included_in_a_third} shows a such configuration. 

\begin{figure}[!hptb]
\begin{center}
\psfrag{a_i}{\small{$a_i$}}
\psfrag{b_i}{\small{$b_i$}}
\psfrag{a_j}{\small{$a_j$}}
\psfrag{b_j}{\small{$b_j$}}
\psfrag{a_k}{\small{$a_k$}}
\psfrag{a}{\small{$a$}}
\psfrag{b}{\small{$b$}}
\psfrag{b_k}{\small{$b_k$}}
\psfrag{L_i}{\small{$L_i$}}
\includegraphics[height=.075\textwidth]{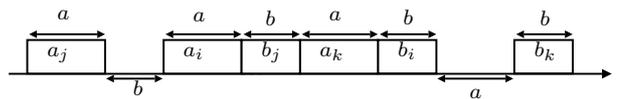}
\caption{At most $2$ sub-tasks may be scheduled between $a_i$ and $b_i$}
\label{two_task_included_in_a_third}
\end{center}
\end{figure}

We can conclude that any valid schedule $\sigma$ can be viewed as a partition $\{T_1, T_2, \dots, T_k\}$ of $\mathcal{A}$, such that for any $T_i$ the subgraph $P_i=G_c[T_i]$ of $G_c$ induced by vertices $T_i$ is a path (here, isolated vertices are considered as paths of length $0$). Clearly, $\{P_1, P_2\dots P_k\}$ is a partition of $G_c$ into vertex-disjoint paths. Figure \ref{the_main_example} shows an instance of $\Pi_1$ (Figure \ref{main_example_GC}), a valid schedule (Figure \ref{main_example}) - not necessarily an optimal one -, and the corresponding partition of $G_c$ into vertex-disjoint paths (Figure \ref{main_example_GC_partition}).

\begin{figure}[!hptb]
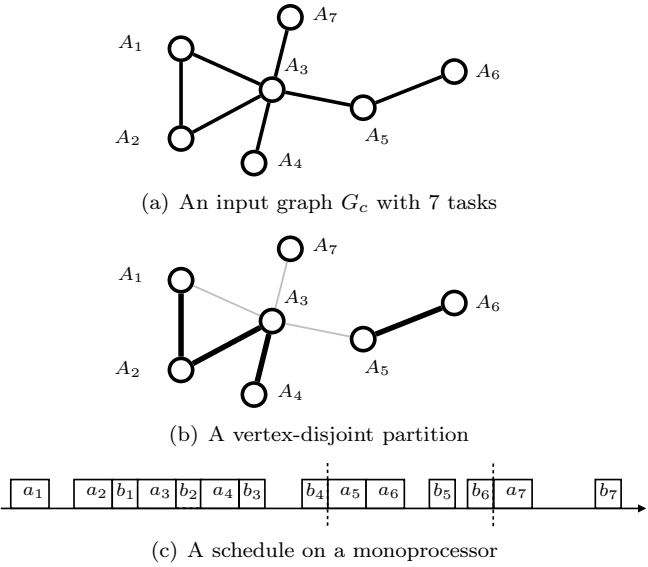

\begin{center}
\psfrag{a_1}{\scriptsize{$a_1$}}
\psfrag{b_1}{\scriptsize{$b_1$}}
\psfrag{a_2}{\scriptsize{$a_2$}}
\psfrag{b_2}{\scriptsize{$b_2$}}
\psfrag{a_3}{\scriptsize{$a_3$}}
\psfrag{b_3}{\scriptsize{$b_3$}}
\psfrag{a_4}{\scriptsize{$a_4$}}
\psfrag{b_4}{\scriptsize{$b_4$}}
\psfrag{a_5}{\scriptsize{$a_5$}}
\psfrag{b_5}{\scriptsize{$b_5$}}
\psfrag{a_6}{\scriptsize{$a_6$}}
\psfrag{b_6}{\scriptsize{$b_6$}}
\psfrag{a_7}{\scriptsize{$a_7$}}
\psfrag{b_7}{\scriptsize{$b_7$}}
\psfrag{A_1}{\scriptsize{$A_1$}}
\psfrag{A_2}{\scriptsize{$A_2$}}
\psfrag{A_3}{\scriptsize{$A_3$}}
\psfrag{A_4}{\scriptsize{$A_4$}}
\psfrag{A_5}{\scriptsize{$A_5$}}
\psfrag{A_6}{\scriptsize{$A_6$}}
\psfrag{A_7}{\scriptsize{$A_7$}}
\subfigure[An input graph $G_c$ with $7$ tasks]{\label{main_example_GC}\includegraphics[height=.14\textwidth]{\graphicsdir main_example_GC}}
\hspace{1cm}
\subfigure[A vertex-disjoint partition]{\label{main_example_GC_partition}\includegraphics[height=.14\textwidth]{\graphicsdir main_example_GC_partition}}
\subfigure[A schedule on a monoprocessor]{\label{main_example}\includegraphics[height=.052\textwidth]{\graphicsdir main_example}}
\caption{Relation between a schedule and a partition into vertex-disjoint paths}
\label{the_main_example}
\end{center}
\end{figure}

For a given feasible schedule $\sigma$, let us analyse the relation between the length of the schedule $C_{max}$ and the corresponding partition  $\{P_1, P_2, \dots P_k\}$ into vertex-disjoint paths. Clearly, we have $C_{max}=t_{seq}+t_{idle}$ where $t_{seq} = n(a+b)$ and $t_{idle}$ is the inactivity time of the processor. Since $t_{seq}$ is fixed for a given instance, $t_{idle}$ obviously depends on the partition. We propose the following lemma:

\begin{lemma}
Let $\{P_1, P_2, \dots P_k\}$ be the partition of vertex-disjoint paths corresponding to a schedule $\sigma$.
\begin{enumerate}
\item A path of length $0$ corresponds to a single task scheduled in $\sigma$, $t_{idle}$ is incremented by $L=a+b$;
\item for any path of length $1$,  $t_{idle}$ is increased by $a$;
\item for any path of length strictly greater than $1$, $t_{idle}$ is incremented by $(a+b)$.
\end{enumerate}
\end{lemma}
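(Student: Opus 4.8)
The plan is to treat each part of the partition independently and to show that, once a valid schedule $\sigma$ is fixed, the tasks of a single path $P_i$ can be placed in a contiguous block of the processor whose idle content depends only on the length of $P_i$. I would first record the structural fact established just before the statement: in the idle interval of any task at most two sub-tasks can be inserted, and when two are inserted the only possibility is a sub-task $b_j$ starting at $\sigma(a_i)+a$ immediately followed by a sub-task $a_k$ starting at $\sigma(a_i)+a+b$, so that the interval $L=a+b$ is then completely filled (Figure \ref{two_task_included_in_a_third}). Consequently a path of the partition is exactly a maximal chain of mutually nested tasks, which one realizes as a staircase $\sigma(a_{i+1})=\sigma(a_i)+a+b$ scheduled in its own block; since distinct paths share no nesting this is a valid schedule and its idle is the sum of the per-path contributions, so the whole verification reduces to locating the empty slots of such a staircase.

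For item 1, an isolated vertex is a task $A_i$ nested with no other task, so by definition of the partition no sub-task is placed inside its idle interval, which therefore contributes exactly $L=a+b$ to $t_{idle}$. For item 3, I would write out the staircase for a chain $A_1-\dots-A_p$ with $p\ge 3$ and $\sigma(a_i)=(i-1)(a+b)$: every interior task has its idle interval entirely filled by the $b$ of its predecessor and the $a$ of its successor, and the only empty slots are the first slot of $A_1$ (length $b$, as it has no predecessor) and the second slot of $A_p$ (length $a$, as it has no successor). Their sum is the claimed increment $a+b$, and a makespan count, $(p+1)(a+b)$ against $p(a+b)$ of processing, confirms that no further idle is present.

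Item 2 is the case that needs a dedicated argument and that I expect to be the main obstacle, because the naive two-task staircase would keep both end slots and wrongly suggest an increment of $a+b$. The point is that with only two tasks $A_i-A_j$ one may place $a_j$ immediately after $a_i$ instead of after an empty slot of length $b$: taking $\sigma(a_i)=0$ and $\sigma(a_j)=a$ gives $b_i$ at $2a+b$ and $b_j$ at $3a+b$, so the processor is idle only on the interval of length $b$ after $a_j$ and on the interval of length $a-b$ between $b_i$ and $b_j$ (using $b<a$), for a total increment of $a$. I would then argue that this tightening cannot be propagated: it leaves residual idle slots of lengths $b$ and $a-b$, both strictly smaller than $a$, so no additional task's $a$-sub-task (of length $a$) can be nested into the block. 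Hence a chain can be compressed to idle $a$ only when it consists of exactly two tasks, which is precisely what separates item 2 from item 3.
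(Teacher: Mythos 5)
Your proof is correct and takes essentially the same route as the paper, which simply points to the two imbrication patterns of Figures \ref{cost_1} and \ref{cost_2} (the tight two-task nesting leaving residual gaps of lengths $b$ and $a-b$, and the staircase $\sigma(a_{i+1})=\sigma(a_i)+a+b$ leaving a gap $b$ at the head and $a$ at the tail) and leaves the optimality check to the reader. Your explicit slot-by-slot computation, the makespan count $(p+1)(a+b)$ versus $p(a+b)$, and the observation that the two-task compression leaves only gaps strictly shorter than $a$ (so it cannot absorb a third task) supply precisely the details the paper omits.
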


\begin{proof}
Point $1$ is obvious. Fig. \ref{cost_general} illustrates points $2$ and $3$: a path of length $1$ represents $2$ tasks that may be imbricated as on Figure  \ref{cost_1}. Paths of length strictly greater than $1$ represent more than two tasks. These tasks can be scheduled in order to get an idle time of length $b$ at the beginning of the schedule and one of length $a$ at the end of it (as on Figure \ref{cost_2}). The reader could check there is no other way to imbricate tasks in order to reduce the idle time for paths of any length. 

\begin{figure}[!hptb]
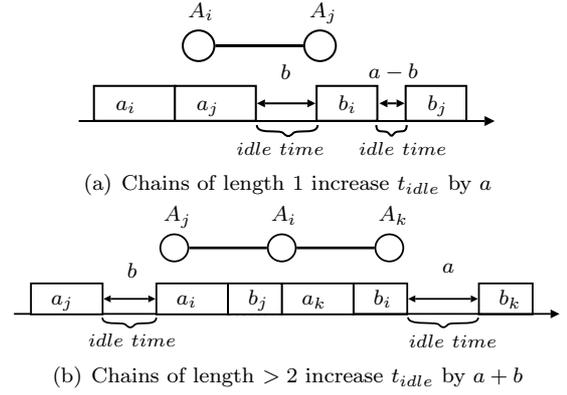

\begin{center}
\psfrag{a_i}{\small{$a_i$}}
\psfrag{b_i}{\small{$b_i$}}
\psfrag{a_j}{\small{$a_j$}}
\psfrag{b_j}{\small{$b_j$}}
\psfrag{a_k}{\small{$a_k$}}
\psfrag{b_k}{\small{$b_k$}}
\psfrag{idle time}{\scriptsize{$idle$ $time$}}
\psfrag{a}{\small{$a$}}
\psfrag{b}{\small{$b$}}
\psfrag{a-b}{\small{$a-b$}}
\psfrag{a+b}{\small{$a+b$}}
\psfrag{A_i}{\small{$A_i$}}
\psfrag{A_j}{\small{$A_j$}}
\psfrag{A_k}{\small{$A_k$}}
\psfrag{L_i}{\small{$L_i$}}
\subfigure[Chains of length $1$ increase $t_{idle}$ by $a$]{\label{cost_1}\includegraphics[height=.12\textwidth]{\graphicsdir cost_1}}
\subfigure[Chains of length $>2$ increase $t_{idle}$ by $a+b$]{\label{cost_2}\includegraphics[height=.11\textwidth]{\graphicsdir cost_2}}

\caption{Impact of the length of the paths on the idle time}
\label{cost_general}
\end{center}
\end{figure}

\end{proof}

Thus, there exists a link between finding an optimal schedule and a graph problem which is called \textsc{Minimum Schedule-Linked Disjoint-Path Cover (\MinSLDPCF)} defined in Table \ref{RefMinSLDPC}:
\begin{table}[!hptb]
\begin{tabular}{|p{8cm}|}
\hline
\textbf{Instance:} a graph $G=(V, E)$ of order $n$, two natural integers $a$ and $b$,  $b < a$.\\
\textbf{Result:} a partition $\mathcal{P}$ of $G$ into vertex-disjoint paths (can be of length $0$) \\
\textbf{Objective:} Minimize $n(a+b) + \sum_{p\in \mathcal{P}}w(p)$ where $w : \mathcal{P} \rightarrow \mathbb{N}$ is a cost function with  $w(p) = a$ if and only if $|E(p)|=1$, and $w(p) = a+b$ otherwise.\\
\hline
\end{tabular}
\caption{ \textsc{Minimum Schedule-Linked Disjoint-Path Cover (\MinSLDPCF)}}
\label{RefMinSLDPC}
\end{table}
Clearly, \MinSLDPC is equivalent to $\Pi_1$ with $b < a$ and $L=a+b$, and can be viewed as the graph problem formulation of a scheduling problem. In any solution, each path increments the cost of idle time by at least $a$ (when the path has a length $1$), and at most $a+b < 2a$. So, we can deduce that an optimal solution to \MinSLDPC consists in finding a partition $\mathcal{P}$ with a particular cardinality $k^*$, and a maximal number of paths of length $1$ among all possible $k^*$-partitions. The following immediate theorem establishes the complexity of \MinSLDPCF:

\begin{theorem}
\MinSLDPC is an $\mathcal{NP}$-hard problem.
\end{theorem}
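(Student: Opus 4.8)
The plan is to prove $\mathcal{NP}$-hardness of \MinSLDPC by a reduction from a known $\mathcal{NP}$-complete problem. Given the structure of the objective function, the natural candidate is the Hamiltonian path problem, or more precisely a path-cover problem. Observe that the objective $n(a+b) + \sum_{p\in\mathcal{P}} w(p)$ is minimized by minimizing $\sum_p w(p)$, and since every path contributes at least $a$ and isolated vertices (paths of length $0$) or long paths each contribute $a+b$, the optimum favors covering the graph with as few paths as possible while turning as many as possible into length-$1$ paths. The cleanest approach is to reduce from the problem of deciding whether a graph admits a Hamiltonian path: if $G$ has a Hamiltonian path, then $\mathcal{P}$ consists of a single path of length $n-1 > 1$, giving $\sum_p w(p) = a+b$, which is the smallest total achievable by any single-path cover and strictly smaller than any cover using two or more paths.

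First I would recall that \textsc{Hamiltonian Path} is $\mathcal{NP}$-complete, and set up the polynomial-time transformation that sends a graph $G$ (the \textsc{Hamiltonian Path} instance) to the \MinSLDPC instance $(G, a, b)$ for fixed constants such as $a = 2$, $b = 1$. Then I would establish the two directions of correctness. For the forward direction, a Hamiltonian path in $G$ is a single vertex-disjoint path covering all $n$ vertices with length $n-1$; assuming $n \geq 3$ this has length strictly greater than $1$, so its cost is $a+b$ and the total objective equals $n(a+b) + (a+b)$. For the reverse direction, I would argue that any partition into $k \geq 2$ paths incurs cost $\sum_p w(p) \geq ka \geq 2a > a+b$ (using $a > b$), so a partition achieving total cost $n(a+b)+(a+b)$ must consist of exactly one path, which is necessarily a Hamiltonian path of $G$. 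This gives the desired equivalence between the existence of a Hamiltonian path and the existence of a \MinSLDPC solution of value at most $n(a+b)+(a+b)$.

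The main obstacle, and the point requiring the most care, is handling the boundary cases in the cost comparison so that the reduction is clean. The inequality $ka > a+b$ for $k \geq 2$ relies on $a > b$, which is guaranteed by the hypothesis $b < a$; this is exactly why a single long path strictly beats any multi-path cover. One must also verify that a single path of length $1$ (which would arise only if $n = 2$) does not interfere: restricting the reduction to instances with $n \geq 3$ handles this, since \textsc{Hamiltonian Path} remains $\mathcal{NP}$-complete on such graphs. A secondary subtlety is that the problem as stated is phrased as an optimization (minimization) problem, so strictly speaking I would phrase the hardness via its natural decision version (``is there a partition of cost at most $K$?'') and note that the word ``immediate'' in the theorem statement reflects that \MinSLDPC directly inherits hardness from the path-cover structure already exhibited in the reduction.

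Alternatively, since the paper characterizes the optimum as finding a partition of minimum cardinality $k^*$ with a maximal number of length-$1$ paths, one could reduce directly from \textsc{Minimum Path Cover} (partition into the fewest vertex-disjoint paths), whose hardness is itself classical and equivalent to \textsc{Hamiltonian Path} when the target is a single path. Either route yields the result; I expect the Hamiltonian-path reduction to be the most transparent, and I would present it as such, keeping the cost arithmetic explicit only at the one inequality $2a > a+b$ that drives the whole argument.
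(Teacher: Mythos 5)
Your proposal is correct and follows essentially the same route as the paper: a reduction from \textsc{Hamiltonian Path} that keeps the graph unchanged and asks whether the objective can reach $n(a+b)+(a+b)$, with the key inequality being that any cover by $k\geq 2$ paths costs at least $2a > a+b$ (using $b<a$). Your explicit treatment of the $n=2$ boundary case and of paths of length $0$ is in fact slightly more careful than the paper's own case analysis, but the underlying argument is identical.
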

\begin{proof}
We consider the decision problem associated to \MinSLDPCF. We will prove that the 
the problem of deciding whether an instance of
\SLDPC has a schedule
of length at most $(n+1)(a+b)$ is ${\mathcal{NP}}$-complete.
Our proof is based on the polynomial-time transformation \textsc{Hamiltonian Path} $ \propto$ \SLDPCF. We keep the graph and the vertices is the task to schedule.
Let us consider a graph $G$.

 This transformation can be clearly computed in polynomial time.
\begin{bulletitemize}

\item Assume that the length of the optimal schedule is $C^{opt}_{max}=(n+1)(a+b)$. We will prove that the graph $G$ possess a Hamiltonian path i.e. $k=1$. Recall first that $k$ is the number of partition and that   $b <a$. We know, from the previous discussion,  that $t_{seq}=n(a+b)$ and that a chain of length one increase $t_{idle}$ by $a$ (see illustration given by Figures \ref{cost_2} and \ref{cost_1}).  It is clear that the graph $G$ must be covered by paths of different lengths.

Suppose that the graph $G$ is covered by $k$ paths with $k_1$ paths of length one (the set of these paths is denoted by $P_1$), and $k_2$ paths of lenght greater than one (resp. by $P_{\geq 2}$).

So the length of schedule given by this covering is:
\begin{eqnarray*}
C^{h}_{max}&=&\stackrel{\mbox{\scriptsize{processing times}}}{\overbrace{n(a+b)}}\!+\!\stackrel{\mbox{\scriptsize{idle time for }} P_{1} }{\overbrace{a \times k_1}}\!+\!\stackrel{\mbox{\scriptsize{idle time for }} P_{\geq 2}}{\overbrace{(a+b)k_2}}\\
&=& (n+k_2)(a+b)+k_1a >C^*_{max} \\
& & \mbox{if } (k_1 \!\neq\! 0 \mbox{ and } k_2 \!\geq\! 1) \ \ \mbox{or } (k_1 \!>\! 1 \mbox{ and } k_2 \!=\! 0)\\
\end{eqnarray*}

Thus, the only schedule requiring exactly $(n+1)(a+b)$ units of time implies that the graph possess a Hamiltonian path i.e. $k_1=0$ and $k_2=1$.
\item Reciprocally, we suppose that the graph $G$ possess a Hamiltonian path, we will prove the existence of a schedule of length $C_{max}=(n+1)(a+b)$.  

$G$ contains an Hamiltonian path, we can deduce a schedule with $C_{max}=(n+1)(a+b)$: as $t_{seq}=n(a+b)$, $t_{idle}$ must be equal to $(a+b)$, which is possible if and only if the schedule is represented with only one chain.
\end{bulletitemize}

\end{proof}

\subsection{A particular case $\Pi_2 \!:\! 1\vert coupled-task,(a_i\!=\!b_i\!=\!p,$ $L_i\!=\!L\!=\!2p), G_c \vert C_{max}$}
\label{a=b}

In this subsection only, we suppose that both sub-tasks are equal to a constant $p$ and that the inactivity time is equal to a constant $L=2p$.

The previous proof cannot be used for this case. Indeed the structure of these tasks allows to schedule three compatible tasks together without idle time (see Figure \ref{triangle}). Another solution consists in covering vertices of $G_c$ by triangles and paths (length $0$ allowed), where we minimize the number of paths and then maximize the number of path of length $1$.

This problem is a generalization of \textsc{Triangle Packing} \cite{GJ79} since an optimal solution without idle time consists in partitioning into triangles the vertices of $G_c$. This problem is well known to be $\mathcal{NP}$-complete, and leads to the $\mathcal{NP}$-completeness of problem $\Pi_2$.

\begin{figure}[!hptb]
	\begin{center}
\psfrag{a_1}{\small{$a_1$}}
\psfrag{a_2}{\small{$a_2$}}
\psfrag{a_3}{\small{$a_3$}}
\psfrag{b_1}{\small{$b_1$}}
\psfrag{b_2}{\small{$b_2$}}
\psfrag{b_3}{\small{$b_3$}}
\psfrag{A_1}{\small{$A_1$}}
\psfrag{A_2}{\small{$A_2$}}
\psfrag{A_3}{\small{$A_3$}}
\psfrag{2p}{\small{$2p$}}
\includegraphics[width=0.45\textwidth]{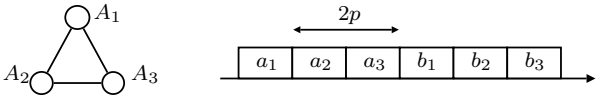}
\caption{Illustration of a schedule without idle time}
\label{triangle}
\end{center}
\end{figure}

A correct approximation algorithm for this problem is an algorithm close to the general case. Indeed, finding an optimal solution to this problem amounts to finding  a covering of the graph $G_c$ by triangles and paths, which minimize the idle time. In the following section, we will develop an efficient polynomial-time approximation algorithm for the general problem \MinSLDPCF.

\section{Approximation algorithm for \MinSLDPCF}
\label{sect-approx}

Notice that the following algorithm, executing sequentially the tasks, admits a ratio equal to two.

We also develop a polynomial-time $\frac{3}{2}$-approximation algorithm based on a maximum matching in the graph $G_c$. In fact, we show that this algorithm has an approximation ratio of at most $\frac{3a+2b}{2a+2b}$, which leads to a ratio between  $\frac{3}{2}$ and $\frac{5}{4}$ according to the values of $a$ and $b$ (with $b < a$). This result, which depends on the values $a$ and $b$, will be discussed in Section \ref{sect-approx-particular}, in order to propose a better ratio on some class of graphs.\\

For any instance of \MinSLDPCF, an optimal schedule has a length $C^{opt}_{max}\!=\!t_{seq}\!+\!t^{opt}_{idle}$ where $t_{seq}\!=\!n(a\!+\!b)$. 

\begin{remark}
For any solution of  length $C_{max}$, we necessarily have\footnote{The equality is obtained when the graph $G_c$ possesses an hamiltonian path, otherwise we need at least two paths to cover $G_c$ (where $G_c$ is not only an edge), which leads to increase $t_{idle}$ by at least $2a\geq a+b$ units of time.} $t_{idle}\geq (a+b)$ and also\footnote{The worst case consists in executing tasks sequentially without scheduling any sub-task $a_j$ or $b_j$ of task $A_j$ during the idle time of a task $A_i$.} $t_{idle}\leq n(a+b)$. Then, for any solution $h$ of \MinSLDPC we have a performance ratio $\rho(h)$ such that:
\begin{equation}
\rho(h) \leq \dfrac{C_{max}^h}{C_{max}^{opt}} \leq \dfrac{2n(a+b)}{(n+1)(a+b)} < 2.
\end{equation}

Indeed, we have $C_{max}^{opt} \geq T_{seq}=n(a+b)+(a+b)$
\end{remark}

In the following, we develop a polynomial-time approximation algorithm based on a maximum matching in the graph $G_c$, with performance guaranty in $[\frac{5}{4} , \frac{3}{2}]$ according to the values of $a$ and $b$.\\

Let $I$ be an instance of our problem. An optimal solution is a disjoint-paths cover. The $n$ vertices are partitioned in three disjoint sets: $n_1$ uncovered vertices, $n_2$ vertices covered by $\alpha_2 =\frac{n_2}{2}$  paths of length $1$, and $n_3$ vertices covered by exactly $\alpha_3$ paths of length strictly greater than $1$ (see illustration Figure \ref{optimal_cover}). The cost of an optimal solution is equal to the sum of sequential time  and idle time:
\begin{eqnarray*}
C_{max}^{opt}&=&\stackrel{\mbox{\scriptsize{processing times}}}{\overbrace{n(a+b)}}+\stackrel{\mbox{\scriptsize{idle time for matched vertices}} }{\overbrace{\frac{n_2}{2}a}}\\
&+&\hspace{-8mm}\stackrel{\mbox{\scriptsize{idle time for isolated vertices}}}{\overbrace{(a+b)n_1}}+\stackrel{\mbox{\scriptsize{idle time for path of length $>1$}}}{\overbrace{(a+b)\alpha_3}}
\end{eqnarray*}

\begin{figure}[!hptb]
\begin{center}
\psfrag{ldot}{\scriptsize{$\ldots$}}
\psfrag{n1}{\scriptsize{$n_1$ vertices}}
\psfrag{alpha3}{\scriptsize{$n_3$}}
\psfrag{vertices}{\scriptsize{vertices}}
\psfrag{n2}{\scriptsize{$n_2$ vertices}}
\psfrag{n3}{\scriptsize{$\alpha_3$ paths}}
\psfrag{ldots}{\scriptsize{$\ldots$}}
\includegraphics[height=.2\textwidth]{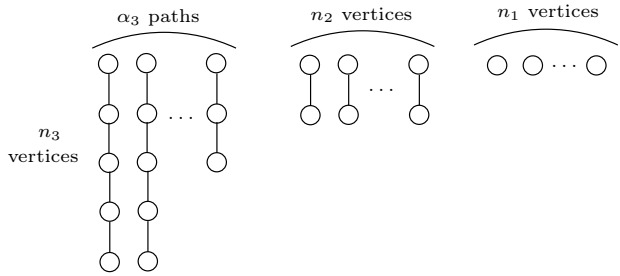}
\caption{Illustration of the optimal solution on an instance $I$}
\label{optimal_cover}
\end{center}
\end{figure}

Now, we propose a polynomial-time approximation algorithm with non trivial ratio on an instance $I$. This algorithm is based on a maximum matching in $G_c$ in order to process  two coupled-tasks at a time. For two coupled-tasks $A_i$ and $A_j$ connected by an edge of the matching, we obtain an idle time of length $a$ (see Figure \ref{cost_1}).

Let $M^*$ be the cardinality of a maximum matching. In the worst case, the $\alpha_3$ paths are all odd and a matching of the paths leaves $\alpha_3$ isolated vertices. So, we have by hypothesis:
 \begin{equation}
   M^* \geq \frac{n_2}{2}+(\frac{n_3}{2}-\alpha_3)=\Gamma\ \textnormal{(worst case)} 
 \end{equation}

Indeed, based on the decomposition given by the Figure \ref{optimal_cover}, we may deduce a matching within this cardinality: the $\alpha_2$ paths of length one are contained in the matching;  for any path of length greater than one, we include odd edges to the matching. In the worst case, all $\alpha_3$ paths have odd length and the number of uncovered nodes is $\alpha_3$.
 If the tasks of the matching are processed first and the isolated vertices in second, the length of schedule is\footnote{Idle time $1$ (resp. $2$) represents the idle time for matched vertices (resp. isolated vertices).}:
\begin{equation*} 
C_{max}^h  \leq \stackrel{\mbox{\scriptsize{processing times}}}{\overbrace{n(a+b)}}+\stackrel{\mbox{\scriptsize{idle time $1$}} }{\overbrace{a \times \Gamma}}+\stackrel{\mbox{\scriptsize{idle time $2$}}}{\overbrace{(a+b)(\alpha_3+n_1)}}
\end{equation*}

Since a optimal length is $C_{max}^{opt}\!=\! n(a\!+\!b)\!$ $+n_1(a\!+\!b)\!+\frac{n_2}{2}a\!+\!\alpha_3(a\!+\!b)$, we obtain, with the last equation involved $C_{max}^h $, the following ratio  of the polynomial-time approximation algorithm:

\begin{eqnarray*}
C_{max}^h  &\leq& C_{max}^{opt}+(\frac{n_3-\alpha_3}{2})a\\
\rho(h) &\leq&1+ \frac{(\frac{n_3-\alpha_3}{2})a}{n(a\!+\!b)\!+\!n_1(a\!+\!b)\!+\frac{n_2}{2}a+\!\alpha_3(a\!+\!b)}\\
 \rho(h) &\leq& 1+\frac{(\frac{n_3-\alpha_3}{2})a}{(n+n_1+\alpha_3)(a\!+\!b)\!+\frac{n_2}{2}a}\\
\rho(h) &\leq& 1+\frac{\frac{n_3}{2}a}{(n+n_1)(a\!+\!b)\!+\frac{n_2}{2}a} \mbox{\small{, max obtained for $\alpha_3=0$}}\\
\rho(h) & \leq & 1+\frac{\frac{n_3}{2}a}{n(a+b)} \leq 1+\frac{\frac{n}{2}a}{n(a+b)} \mbox{\small{, since $n_3 \leq n$}}\\
\rho(h) & \leq & 1+\frac{a}{2(a+b)}= \frac{3a+2b}{2a+2b}\\
\end{eqnarray*}

\section{Instances with particular topologies}
\label{sect-approx-particular}

We conclude this work by a study of \MinSLDPC when $G_c$ admits a particular topology. First we present a related problem: 
 \textsc{Minimum Disjoint Path Cover problem} (\MinDPCF). This problem has some interesting results on restricted topologies. We establish a link between \MinDPC and \MinSLDPC and we show that finding a $\rho_{dpc}$-approximation for \MinDPC on $G_c$ allows to find a strategy with performance ratio $\rho_{sldpc} \leq \min \{ \rho_{dpc}\!\times\!(\frac{a+b}{a}), \frac{3a+2b}{2a+2b} \}$. This leads to propose, independently from the values $a$ and $b$, a $\frac{1+\sqrt{3}}{2}$-approximation for \MinSLDPC when \MinDPC can be polynomially solved on $G_c$. 

\subsection{A related problem:  \MinDPCF}

The graph problem  \MinSLDPC is very close to the well-known problem  \textsc{Minimum Disjoint Path Cover (\MinDPCF)} which consists in covering the vertices of a graph with a minimum number of vertex-disjoint paths\footnote{Sometimes referenced as the \textsc{Path-Partition} problem (PP).}. This problem has been studied in depth in several graph classes: it is known that this problem is polynomial on cographs \cite{NOZ03}, blocks graphs and bipartite permutation graphs \cite{SSSSPR93}, distance-hereditary graph \cite{HC07}, and on interval graphs \cite{RAPR90}. In \cite{BCM77} and \cite{GHS75}, the authors have proposed (independently) a polynomial-time algorithm in the case where the graph is a tree. Few years later, in \cite{K76} the authors showed that this algorithm can be implemented in linear time. Among the other results, there is a polynomial-time algorithm for the cacti \cite{MW88}, and another for the line graphs of a cactus \cite{DM04}. In circular-arc graphs the authors \cite{HC06} have proposed an approximation algorithm of complexity $O(n)$, which returns an optimal number of paths to a nearly constant additive equal to $1$.

The problem \MinDPC is directly linked to \textsc{Hamiltonian Completion} \cite{GJ79}, which consists in finding the minimum number of edges, noted $HC(G)$, that must be added to a given graph $G$, in order to make it hamiltonian (to guarantee the existence of a Hamiltonian cycle). It is known that if $G$ is not hamiltonian, then the cardinality of a minimum disjoint path cover is clearly equal to $HC(G)$.

The dual of \MinDPC is  \textsc{Maximum Disjoint-Path Cover} \cite{GJ79}. It consists in finding in $G$ a collection of vertex-disjoint paths of length at least $1$, which maximises the edges covered in $G$. This problem is known to be $\frac{7}{6}$-approximable \cite{BK06}.

\subsection{Relation between \MinDPC and \MinSLDPCF}

From the literature, we know that \MinDPC is polynomial on trees \cite{GHS75,K76,BCM77}, distance-hereditary graphs \cite{HC07}, bipartite permutation graphs \cite{SSSSPR93}, cactis \cite{MW88} and many others classes. There are currently no result about the complexity of \MinSLDPC on such graphs: since the values of of $a$ and $b$ have a high impact, techniques used to prove the polynomiality of \MinDPC cannot be adapted to prove the polynomiality of \MinSLDPCF. Despite all our effort, the complexity of \MinSLDPC remains an open problem. However using known results on \MinDPCF, we show how the approximation ratio can be decreased for these class of graphs. We propose the following lemma:
\begin{lemma}
If \MinDPC can be solved polynomial computation time, then there exists a polynomial-time $\frac{(a+b)}{a}$-approximation for \MinSLDPCF.
\end{lemma}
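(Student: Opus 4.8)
The plan is to reuse an optimal \MinDPC cover directly as a feasible \MinSLDPC solution, and to control its cost solely through the number of paths it contains.

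First I would run the assumed polynomial \MinDPC algorithm on $G_c$ to obtain a partition $\mathcal{P}$ into vertex-disjoint paths whose cardinality $k^*$ is minimum. The very same partition $\mathcal{P}$ is a feasible solution for \MinSLDPCF, and since it is produced in polynomial time by hypothesis, the overall algorithm is polynomial. Observe that no structural information about the individual paths is required: the \MinDPC oracle is used purely as a black box to certify the minimum path count $k^*$.

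The heart of the argument is a pair of bounds on the idle time $t_{idle}=\sum_{p\in\mathcal{P}}w(p)$, exploiting that $a\le w(p)\le a+b$ for every path $p$ (the lower bound is attained by paths of length $1$, the upper bound by all other paths, using $b>0$). For the \emph{optimum}, note that any \MinSLDPC solution is in particular a partition of $G_c$ into vertex-disjoint paths, hence uses at least $k^*$ paths, since $k^*$ is the smallest number of paths achievable by any such cover. As each path contributes at least $a$ to the idle time, this yields $t_{idle}^{opt}\ge k^*a$. For the \emph{heuristic}, our cover consists of exactly $k^*$ paths, each contributing at most $a+b$, so $t_{idle}^{h}\le k^*(a+b)$. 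Dividing these two bounds gives $t_{idle}^{h}/t_{idle}^{opt}\le (a+b)/a$.

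It remains to transfer this ratio from the idle time to the full objective $C_{max}=n(a+b)+t_{idle}$. Minimizing $C_{max}$ is equivalent to minimizing $t_{idle}$, so $t_{idle}^{opt}\le t_{idle}^{h}$ and the ratio above is at least $1$; the sequential term $n(a+b)$ is a fixed positive quantity shared by the heuristic and the optimum. Since adding the same positive constant to the numerator and denominator of a fraction that is at least $1$ can only decrease it, we conclude
\begin{equation*}
\frac{C_{max}^{h}}{C_{max}^{opt}}=\frac{n(a+b)+t_{idle}^{h}}{n(a+b)+t_{idle}^{opt}}\le \frac{t_{idle}^{h}}{t_{idle}^{opt}}\le \frac{a+b}{a}.
\end{equation*}
The step I expect to require the most care is the lower bound $t_{idle}^{opt}\ge k^*a$: this is precisely where optimality of the \MinDPC cover is invoked, encoding the fact that the \MinSLDPC optimum cannot use fewer paths than a minimum path cover, so that minimizing the raw number of paths already pins down the idle time up to the per-path factor $(a+b)/a$.
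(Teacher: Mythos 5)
Your proof is correct and follows essentially the same route as the paper's: both rest on the two facts that the optimal \MinSLDPC partition uses at least $k^*=|\mathcal{P}^*_1|$ paths and that every path contributes between $a$ and $a+b$ to the idle time. The only difference is cosmetic --- you factor out the sequential term $n(a+b)$ and transfer the ratio via the observation that adding a common nonnegative constant to a ratio at least $1$ can only decrease it, whereas the paper carries $n(a+b)$ through its chain of inequalities; the bound obtained is identical.
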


\begin{proof}
Let $I_1=(G)$ be an instance of  \MinDPCF, and $I_2=(G,a,b)$ an instance of \MinSLDPCF.
Let $\mathcal{P}^*_1$ be an optimal solution of \MinDPC of cost $|\mathcal{P}^*_1|$ , and  $\mathcal{P}^*_2$ an optimal solution of \MinSLDPC of cost $OPT_{sldpc}$. According to the definition of  \MinSLDPCF,  we have $|\mathcal{P}^*_2| \geq |\mathcal{P}^*_1|$\footnote{The best solution for \MinSLDPC is not necessarily a solution with a minimum cardinality of $k$.}. Since each path of a \MinSLDPC solution increments the cost of the solution by at least $a$, then we have:
\begin{eqnarray}
OPT_{sldpc} &\!=\!& \sum_{p\in \mathcal{P}^*_2} w(p) \!+\!n(a\!+\!b)\nonumber\\
& \!\geq\!&  a|\mathcal{P}^*_2|\!+\!n(a\!+\!b) \!\geq\!  a|\mathcal{P}^*_2| \label{eq_1}\\
& \Rightarrow &\frac{OPT_{sldpc}}{a} \geq |\mathcal{P}^*_2|
\label{eq_2}
\end{eqnarray}

Let us consider the partition $\mathcal{P}^*_1$  as a solution (not necessarily optimal) to the instance $I_2$ of \MinSLDPCF, and let us evaluate it cost. Since each path of a \MinSLDPC solution increments the cost of the solution by at most $a+b$, then we have:
\begin{eqnarray}
&\sum_{p\in \mathcal{P}^*_1}&w(p) +n(a+b)  \leq  (a+b)|\mathcal{P}^*_1|+n(a+b)\nonumber\\
& \leq& (a+b)|\mathcal{P}^*_2| +n(a+b) \nonumber\\
& \leq &  b|\mathcal{P}^*_2| + OPT_{sldpc}\qquad \ \ \  \textnormal{ according to (\ref{eq_1})}\nonumber\\
& \leq & \frac{b}{a}OPT_{sldpc}+OPT_{sldpc} \ \ \textnormal{according to (\ref{eq_2})}\nonumber\\
& \leq & \frac{a+b}{a}OPT_{sldpc}
\end{eqnarray}

\end{proof}

The same proof may be applied if there exists a $\rho_{dpc}$-approximation for \MinDPCF, and then there exists a $\rho_{dpc}\!\times\!(\frac{a+b}{a})$-approximation for \MinSLDPCF. Let us suppose that we know a constant $\rho_{dpc}$ such that there exists a $\rho_{dpc}$-approximation for \MinDPC on $G_c$. Let $S_1$ be the strategy, which consists in determining a $\rho_{dpc}\!\times\!(\frac{a+b}{a})$-approximation for \MinSLDPC from $\rho_{dpc}$, and $S_2$ the strategy, which consists in using the algorithm introduced in section \ref{sect-approx}. Clearly, $S_1$ is particularly relevant when $b$ is very small in comparison with $a$.  Whereas $S_2$ gives better ratio when $b$ is close to $a$. Both strategies are complementary along the value of $b$, which varies from $0$ to $a$. Choosing the best result between the execution of $S_1$ and $S_2$, gives a performance ratio $\rho_{sldpc}$ such that: 
\begin{equation}
\rho_{sldpc} \leq \min\big\{ \rho_{dpc}\times \left(\frac{a+b}{a}\right), \frac{3a+2b}{2a+2b}\big\}.
\end{equation}

Compared to executing $S_1$ only, this new strategy increases the obtained results if and only if $\rho_{dpc}$ is lower than $\frac{3}{2}$ (see Figure \ref{graphique_rho_approx}). \\

\begin{figure}[!hptb]
\begin{center}
\psfrag{a}{$a$}
\psfrag{a/2}{$\frac{a}{2}$}
\psfrag{a/4}{$\frac{a}{4}$}
\psfrag{3a/4}{$\frac{3a}{4}$}
\psfrag{0}{$0$}
\psfrag{1}{$1$}
\psfrag{rho}{$\rho_{dpc}$}
\psfrag{5/4}{$\frac{5}{4}$}
\psfrag{3/2}{$\frac{3}{2}$}
\psfrag{S1}{$S_1$}
\psfrag{S2}{$S_2$}
\psfrag{bound}{\small{bound}}
\psfrag{value of b}{\small{value of b}}
\psfrag{approximation ratio}{\small{approximation ratio}}
\includegraphics[height=.22\textwidth]{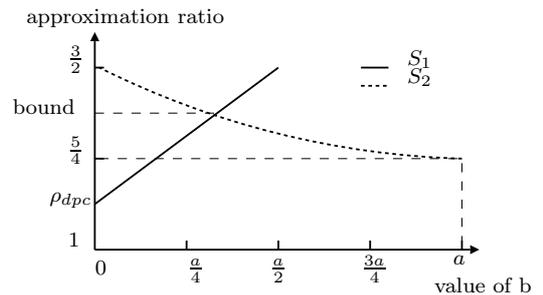}
\caption{Finding a good $\rho_{dcp}$-approximation helps to increase the results of Section \ref{sect-approx}}
\label{graphique_rho_approx}
\end{center}
\end{figure}

We propose the following remark, which is not good news: 
\begin{remark}
There is no $\rho_{dpc}$-approximation for \MinDPC in general graphs for some $\rho_{dpc} < 2$
\end{remark}

This result is a consequence of the Impossibility Theorem \cite{CP1995}. It can be checked by considering an ins\-tance of \MinDPC which has an hamiltonian path: the optimal solution of \MinDPC has cost $1$, thus any polynomial-time $\rho_{dpc}$ - approximation algorithm with $ \rho_{dpc}<2$ will return a solution of cost $1$, which is not allowed under the assumption that $\mathcal{P} \neq \mathcal{NP}$. This result also implies that constant-factor approximation algorithms for max-DPC do not necessarily give the same performance guarantees on min-DCP, since the best approximation ratio for max-DCP is $\frac{7}{6}$, which is lower than the inapproximability bound for min-DCP.\\

It is good news that \MinDPC is polynomial for some compatibility graphs such as trees \cite{GHS75,K76,BCM77}, distance-hereditary graphs \cite{HC07}, bipartite permutation graphs \cite{SSSSPR93}, cactis \cite{MW88} and many others classes; thus $\rho_{dpc}=1$. For all these graphs we obtain an approximation ratio of $\min \{ \frac{(a+b)}{a}, \frac{3a+2b}{2a+2b} \}$ which is maximal when $\frac{(a+b)}{a} = \frac{3a+2b}{2a+2b}$, i.e.:
\begin{eqnarray}
\frac{(a+b)}{a} = \frac{3a+2b}{2a+2b} \Leftrightarrow
- a^2+2ab+2b^2 = 0.
\end{eqnarray}

The only solution of this equation with $a$ and $b\geq 0$ is $a=b (1 + \sqrt{3})$. By replacing $a$ by this new value on $\frac{(a+b)}{a}$ or on $\frac{3a+2b}{2a+2b}$, we show that in the worst case the approximation ratio is reduced from $\frac{3}{2}$ down to $\frac{1+\sqrt{3}}{2}\approx 1.37$.

\section{Conclusion}

We investigate a particular coupled-tasks scheduling problem $\Pi_1$ in presence of a compatibility graph. We have shown how this scheduling problem can be reduced to a graph problem. We have proved that adding the compatibility graph leads to the $\mathcal{NP}$-completeness of $\Pi_1$, whereas the problem is obviously polynomial when there is a complete compatibility graph (each task is compatible with each other).  We have proposed a $\rho$-approximation of $\Pi_1$ where $\rho$ is between $\frac{3}{2}$ and  $\frac{5}{4}$ according to value of $a$ and $b$. We have also decreased the upper bound of $\frac{3}{2}$ down to $\approx 1.37$ on instances where \textsc{Minimum Disjoint Path Cover} can be polynomially  solved on the compatibility graph. 

As perspectives of this work, we plan to test the pertinence of our poly\-nomial-time approximation algorithm  through simulations in order to determine the average gap between the optimal solution and the results obtained with our strategy on significant instances. We also aim to classify  the complexity of other configurations,  especially $\Pi_1: 1 | coupled-task, (a_i=a, b_i=b, L_i=L) | C_{max}$ with a complete compatibility graph.

\section{Acknowledgements}

Thanks to our reviewers for the thorough review of this paper and many helpful comments and suggestions.

\nocite{}
\bibliographystyle{plain}
\bibliography{./newBib}

\end{document}